\newtheorem{theorem}{Theorem}[section]
\newtheorem{lem}[theorem]{Lemma}              
\newtheorem{pro}[theorem]{Proposition}  
\newtheorem{rem}[theorem]{Remark} 
\theoremstyle{definition}
\newtheorem{dfn}{Definition}[section]
\DeclareRobustCommand{\officialeuro}{%
	\ifmmode\expandafter\text\fi
	{\fontencoding{U}\fontfamily{eurosym}\selectfont e}}
\begin{document}

\title[Hyperfinite Construction of $G$-expectation]{Hyperfinite Construction of $G$-expectation}

\author{Tolulope Fadina}
\address{Faculty of Mathematics, Bielefeld University, D-33615 Bielefeld, Germany. Email: tfadina@math.uni-bielefeld.de, tolulope.fadina@stochastik.uni-freiburg.de}
\author {Frederik Herzberg}
    \address{Center for Mathematical Economics (IMW), Bielefeld University, D-33615 Bielefeld, Germany. Email: fherzberg@uni-bielefeld.de}
\thanks{We are very grateful to Patrick Beissner, Yan Dolinsky, and Frank Riedel for helpful comments and suggestions. This work was supported by the International Graduate College (IGK) {\em Stochastics and Real World Models} (Bielefeld--Beijing) and the Rectorate of Bielefeld University (Bielefeld Young Researchers' Fund)}

\maketitle

\begin{abstract}
\noindent 
The {\em hyperfinite $G$-expectation} is a nonstandard discrete analogue of $G$-expectation (in the sense of Robinsonian nonstandard analysis). A {\em lifting} of a continuous-time $G$-expectation operator is defined as a hyperfinite $G$-expectation which is infinitely close, in the sense of nonstandard topology, to the continuous-time $G$-expectation. We develop the basic theory for hyperfinite $G$-expectations and prove an existence theorem for liftings of (continuous-time) $G$-expectation. For the proof of the lifting theorem, we use a new discretization theorem for the $G$-expectation (also established in this paper, based on the work of Dolinsky, Nutz and Soner [Stoch. Proc. Appl. 122, (2012), 664--675]). 

\noindent
{\bf Keywords:} $G$-expectation; Volatility uncertainty; Weak limit theorem; Lifting theorem; Nonstandard analysis; Hyperfinite discretization.

\end{abstract}

\newcommand{\ulambda}{\underline{\lambda}}
\newcommand{\olambda}{\overline{\lambda}}
\newtheorem{thm}{Theorem}

\newtheorem{remark}[thm]{Remark}

\newtheorem{assumption}{Assumption}
\newtheorem{case}{Case}
\newtheorem{nota}[thm]{Notational convention}
\newcommand {\PP}{\mathbb P}

\newcommand {\QQ}{\mathbb Q}
\newcommand {\half}{\frac{1}{2}}

\newcommand {\fP}{\mathfrak{P}}
\newcommand\norm[1]{\left\lVert#1\right\rVert}

\section{Introduction}
\citet{Dolinsky} showed a Donsker-type result for $G$-Brownian motion
by introducing a notion of volatility uncertainty in discrete time and defined a discrete version of \textit{Peng's} $G$-expectation. 
In the continuous-time limit, the resulting sublinear expectation converges weakly to $G$-expectation. In their discretization, \citet{Dolinsky} allow for martingale laws whose support is the whole set of reals in a $d$-dimensional setting. In other words, they only discretize the time line, but not the state space of the canonical process. Now for certain applications, for example, a hyperfinite construction of $G$-expectation in the sense of Robinsonian nonstandard analysis, a discretization of the state space would be necessary. 
Thus, we develop a modification of the construction by \citet{Dolinsky} which even ensures that the sublinear expectation operator for the discrete-time canonical process corresponding to this discretization of the state space 
(whence the martingale laws are supported by a finite lattice only) converges to the $G$-expectation. Further, we prove a lifting theorem, in the sense of Robinsonian nonstandard analysis, for the $G$-expectation. Herein, we use the discretization result for the $G$-expectation.

Nonstandard analysis makes consistent use of infinitesimals in mathematical analysis based on techniques from mathematical logic. This approach is very promising because it also allows, for instance, to study continuous-time stochastic
processes as formally finite objects. Many authors have applied nonstandard analysis to problems in measure theory, probability theory and mathematical economics (see for example, \citet{Raimondo} and the references therein or the 
contribution in \citet{Imme}), especially after \citet{Loeb} converted nonstandard measures (i.e. the images of standard measures under the nonstandard embedding ${}^{*}$) into real-valued, 
countably additive measures, by means of the standard part operator and \textit{Caratheodory}'s extension theorem. 
One of the main ideas behind these applications is the extension of the notion of a finite set known as {\em hyperfinite set} or more causally, 
a formally finite set. Very roughly speaking, hyperfinite sets are sets that can be formally enumerated with both standard and nonstandard natural numbers up to a (standard or nonstandard, i.e. unlimited) natural number. 

\citet{Anderson, Keisler, Toml, Edwin}, a few to mention, used Loeb's \cite{Loeb} approach to develop basic nonstandard stochastic analysis and in particular, the nonstandard It{\^{o}} calculus. \citet{Loeb} also presents the construction of a Poisson processes using nonstandard analysis. 
\citet{Anderson} showed that Brownian motion can be constructed from a hyperfinite number of coin tosses, and provides a detailed proof using a special case of Donsker's theorem. \citet{Anderson} also gave a nonstandard construction of stochastic
integration with respect to his construction of Brownian motion. \citet{Keisler} uses Anderson's \cite{Anderson} result to obtain some results on stochastic differential equations. \citet{Toml} gave the hyperfinite construction ({\em lifting}) of $L^2$ standard martingales. 
Using nonstandard stochastic analysis, \citet{Perkins81} proved a global characterization of (standard) Brownian local time.
In this paper, we do not work on the Loeb space because the $G$-expectation and its corresponding $G$-Brownian motion are not based on a classical probability measure, but on a set of martingale laws.

The aim of this paper is to give two approximation results on $G$-expectation. First, to refine the discretization of $G$-expectation by \citet{Dolinsky}, in order to obtain a discretization of  the sublinear expectation where the martingale laws are defined on a finite lattice rather than the whole set of reals. Second, to give an alternative, combinatorially inspired construction of the $G$-expectation based on the discretization result. We hope that this result may eventually become useful for applications in financial economics (especially existence of equilibrium on continuous-time financial markets with volatility uncertainty) and provides additional intuition for {\em Peng}'s $G$-stochastic calculus. 
We begin the nonstandard treatment of the $G$-expectation by defining a notion of $S$-continuity, a standard part operator, and proving a corresponding lifting (and pushing down) theorem. Thereby, we show that our hyperfinite construction is the appropriate nonstandard analogue of the $G$-expectation.

The rest of this paper is divided into two parts: in the first part, Section \ref{weakapprox}, we define Peng's $G$-expectation and introduce a discrete-time analogue of a $G$-expectation in the spirit of \citet{Dolinsky}. Unlike in \citet{Dolinsky}, we require the discretization of the martingale laws to be defined on a finite lattice rather than the whole set of reals. In the continuous-time limit, the resulting sublinear expectation converges weakly to the continuous-time $G$-expectation.   
In the second part, Section \ref{hyperfinitecons},   we develop the basic theory for hyperfinite $G$-expectations and prove an existence theorem for liftings of (continuous-time) $G$-expectation. We extend the discrete time analogue of the $G$-expectation in Section \ref{weakapprox} to a hyperfinite time analogue. Then, we use the characterization of convergence in nonstandard analysis to prove that the hyperfinite discrete-time analogue of the $G$-expectation is infinitely close in the sense of nonstandard topology to the continuous-time $G$-expectation.

\section{Weak approximation of $G$-expectation with discrete state space}
\label{weakapprox}

\citet{Peng} introduced a sublinear expectation on a well-defined space $\mathbb{L}^1_{G}$, the completion of $\text{Lip}_{b.cyl}(\Omega)$ (bounded and Lipschitz cylinder function) under the norm $\| \cdot \|_{\mathbb{L}^1_{G}}$, 
under which the increments of the canonical process $(B_t)_{t>0}$ are zero-mean, independent and stationary and can be proved to be $(G)$-normally distributed. This type of process is called {\em G-Brownian motion}
and the corresponding sublinear expectation is called {\em G-expectation}. 

The $G$-expectation $\xi \mapsto \mathcal{E}^{G}(\xi)$ is a sublinear operator defined on a class of random variables on $\Omega$. The symbol $G$ refers to a given function 
\begin{equation}
\label{gfunc}
G(\gamma):= \frac{1}{2} \sup_{c \in \mathbf{D}} c \gamma :\mathbb{R} \rightarrow \mathbb{R}
\end{equation}
where $\mathbf{D}=[r_{\mathbf{D}},R_{\mathbf{D}}]$ is a nonempty, compact and convex set, and $0\leq r_{\mathbf{D}} \leq R_{\mathbf{D}} < \infty$ are fixed numbers. The construction of the $G$-expectation is as follows. Let $\xi = f(B_T)$, 
where $B_T$ is the $G$-Brownian motion and $f$ a sufficiently regular function. Then $\mathcal{E}^{G}(\xi)$ is defined to be the initial value $u(0,0)$ of the solution of the nonlinear backward heat
equation, 
$$-\partial_t u - G(\partial^2_{xx} u) = 0,$$
with terminal condition $u(\cdot, T)=f$, \citet{Pardoux}. The mapping $\mathcal{E}^{G}$ can be extended to random variables of the form ${\xi = f(B_{t_1}, \cdots, B_{t_n} )}$ by a stepwise evaluation of the PDE and then to the completion $\mathbb{L}^1_{G}$ of the space of all such random variables (cf. \citet{Dolinsky}). 
\citet{Denis} showed that $\mathbb{L}^1_{G}$ is the completion of $\mathcal{C}_b(\Omega)$ and $\text{Lip}_{b.cyl}(\Omega)$ under the norm ${\| \cdot \| _{\mathbb{L}^1_{G}}}$, and that $\mathbb{L}^1_{G}$ is the space of the so-called quasi-continuous function and contains all bounded continuous functions on the canonical space $\Omega$, but not all bounded measurable functions are included. \citet{Ruan} introduced the invariance principle of $G$-Brownian motion using the theory of sublinear expectation.
There also exists an equivalent alternative representation of the $G$-expectation known as the {\em dual view on $G$-expectation via volatility uncertainty}, see \citet{Denis}:
\begin{equation}
\label{superhedge}
\mathcal{E}^{G}(\xi) = \sup _{P \in \mathcal{P}^G} \mathbb{E}^{P} [\xi], \quad \xi= f(B_T),
\end{equation}
where $\mathcal{P}^G$ is defined as the set of probability measures on $\Omega$ such that, for any $P \in \mathcal{P}^G$, $B$ is a martingale with 
the volatility $d\left\langle B \right\rangle _t / dt \in \mathbf{D}$ $P \otimes dt$ a.e.

\subsection{Continuous-time construction of sublinear expectation} 
\label{G-framework}
Let $\Omega = \{ \omega \in \mathcal{C}([0,T]; \mathbb{R}) : \omega_0=0 \}$ be the canonical space endowed with the uniform norm ${\|\omega\| _{\infty} = \sup _{0 \leq t \leq T} |\omega _t| },$ where $|\cdot|$ 
denotes the absolute value
on $\mathbb{R}$. Let $B$ be the canonical process $B_t(\omega) = \omega_t,$ and $\mathcal{F}_t = \sigma (B_s, 0 \leq s \leq t)$ 
the filtration generated by $B$. 
A probability measure $P$ on $\Omega$ is a martingale law provided $B$ is a $P$-martingale and $B_0 = 0$ $P$ a.s. Then, $\mathcal{P}_{\mathbf{D}}$ is the set of martingale laws on $\Omega$ and 
the volatility takes values in $\mathbf{D}$, $P \otimes dt$ a.e; 
\begin{equation*}
\mathcal{P}_{\mathbf{D}} = \left\{ P \text{ martingale law on } \Omega \text{: } d\left\langle B \right\rangle _t / dt \in \mathbf{D} \text{, } P \otimes dt \text{ a.e.} \right\}.
\end{equation*}

\subsection{Discrete-time construction of sublinear expectation}We denote
\begin{equation*}
\mathcal{L}_n = \left\{ \frac{j}{n \sqrt{n}}, \quad -n^2 \sqrt{R_{\mathbf{D}}} \leq j \leq  n^2 \sqrt{R_{\mathbf{D}}}, \quad \text{for } j\in \mathbb{Z}\right\},
\end{equation*}
and $\mathcal{L}^{n+1}_{n}= \mathcal{L}_n \times \dots \times \mathcal{L}_n  (n+1 \text{  times})$, for $n\in \mathbb{N}$. 
Let $X^n = {(X^n_k)}_{k=0}^{n}$ be the canonical process $X^n_k(x)= x_k$ defined on $\mathcal{L}^{n+1}_{n}$ and ${(\mathcal{F}_k^n)}_{k=0}^{n} = \sigma (X_l^n, l =0, \dots, k)$ be the filtration generated by $X^n$. We note that ${R_{\mathbf{D}} = \sup_{\alpha \in \mathbf{D}}|\alpha|}$.
$$\mathbf{D}^{\prime}_n = \mathbf{D} \cap \left( \frac{1}{n} \mathbb{N} \right)^2$$
is a nonempty bounded set of volatilities. A probability measure $P$ on $\mathcal{L}_{n}^{n+1}$ is a martingale law provided $X^n$ is a $P$-martingale and $X^n_0=0$ $P$ a.s. The increment $\Delta X^n_k = X^n_k - X^n_{k-1}$.
Let $\mathcal{P}_{\mathbf{D}}^{n}$ be the set of martingale laws of $X^n$ on $\mathbb{R}^{n+1} $, i.e.,
\begin{equation*}
\mathcal{P}_{\mathbf{D}}^{n} = \left\{ P \text{ martingale law on } \mathbb{R}^{n+1}  \text{: } r_{\mathbf{D}} \leq |\Delta X^n_k|^2 \leq R_{\mathbf{D}} \text{, }P \text{ a.s.} \right\},
\end{equation*} 
such that for all $n$, $ \mathcal{L}_{n}^{n+1} \subseteq \mathbb{R}^{n+1} $. 
\subparagraph*{ }In order to establish a relation between the continuous-time and discrete-time settings, we obtained a continuous-time process $\widehat{x} _t \in \Omega$ from any discrete path $x \in \mathcal{L}^{n+1}_{n}$ by linear interpolation. i.e.,
$$\widehat{x}_t := (\lfloor nt/T \rfloor +1 -nt/T)x_{ \lfloor nt/T \rfloor } + (nt/T - \lfloor nt/T \rfloor)x_{\lfloor nt/T \rfloor +1}$$
where $\widehat{   }: \mathcal{L}^{n+1}_{n} \rightarrow \Omega $ is the linear interpolation operator, $x = (x_0, \dots, x_n) \mapsto \widehat{x} = \{(\widehat{x})_{0\leq t \leq T}\} ,$ and $ \lfloor y \rfloor $ denotes the greatest integer less than or equal to $y.$ If $X^n$ is the canonical process on $\mathcal{L}^{n+1}_{n}$ and $\xi$ 
is a random variable on $\Omega,$ then $\xi (\widehat{X}^n)$ defines a random variable on $\mathcal{L}^{n+1}_{n}.$

\subsection{Strong formulation of volatility uncertainty}
We consider martingale laws generated by stochastic integrals with respect to a fixed Brownian motion as in \citet{Dolinsky,Marcel} and a fixed random walk as in \citet{Dolinsky}.\\
Continuous-time construction; let $\mathcal{Q}_{\mathbf{D}} $ be the set of martingale laws:
\begin{equation*}
\mathcal{Q}_{\mathbf{D}} = \left\{ P _{0} \circ (M)^{-1}; \text{ } M = \int f (t, B) dB_t, \text{ and }  f \in \mathcal{C} ([0,T] \times \Omega ; \sqrt{\mathbf{D}}) \text{ is adapted} \right\}.
\end{equation*}
$B$ is the canonical process under the Wiener measure $P_0 $.\\
Discrete-time construction; we fix $n \in \mathbb{N}$, $\Omega _n = \lbrace \omega = (\omega _1, \dots, \omega _n) : \omega _i \in \lbrace \pm 1 \rbrace, \quad i =1, \dots,n \rbrace$ equipped with the power set and let 
$$P_n = \underbrace{ \frac{\delta_{-1}+ \delta_{+1}}{2} \otimes \cdots \otimes  \frac{\delta_{-1}+ \delta_{+1}}{2} }_\text{n times}$$ 
be the product probability associated with the uniform distribution where $\delta_{x}(A)$ is a Dirac 
measure for any $A \subseteq \mathbb{R}$ and a given $x \in A$.   
Let $\xi _1, \dots, \xi_n$ be an i.i.d sequence of $\lbrace \pm 1 \rbrace$-valued random variables. The components of $\xi _k$ are orthonormal in $L^2(P_n)$ and the associated scaled random walk is $$ \mathbb{X} = \frac{1}{\sqrt{n}} \sum _{l=1}^k \xi _l. $$
We denote by $\mathcal{Q}_{\mathbf{D}^{\prime}_n}^{n}$ the set of martingale laws of the form:
\begin{align}
\label{disc-meas} 
\mathcal{Q}_{\mathbf{D}^{\prime}_n}^{n} = \left\{  P_n  \circ (M^{f, \mathbb{X}})^{-1}; \text{  } f: \lbrace 0, \dots, n \rbrace \times \mathcal{L}^{n+1}_{n} \rightarrow \sqrt{\mathbf{D}^{\prime}_n} \text{ is } \mathcal{F}^n \text{-adapted.} 
\right\}
\end{align}
where $ M^{f, \mathbb{X}} = \left( \sum _{l=1}^{k} f (l-1,\mathbb{X})\Delta \mathbb{X}_ l  \right)_{k=0}^{n}.$

\subsection{Results and proofs}
\label{G-result}
Theorem \ref{max-result}  states that a sublinear expectation with discrete-time volatility uncertainty on our finite lattice converges to the $G$-expectation.

\begin{lem}
	\label{equivmeas}

	$\mathcal{Q}_{\mathbf{D}}^{n} = \left\{ P_n  \circ \left( M^{f,\mathbb{X}} \right) ^{-1}; \text{ }f: \lbrace 0, \dots, n \rbrace \times \mathbb{R}^{n+1} \rightarrow \sqrt{\mathbf{D}} \text{ is adapted}\right\}.
	$ Then $\mathcal{Q}_{\mathbf{D}}^{n} \subseteq \mathcal{P}_{\mathbf{D}}^{n}$. 
\end{lem}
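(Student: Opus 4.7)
The plan is to show that every element of $\mathcal{Q}_{\mathbf{D}}^{n}$ arises as the law of a process which satisfies, under $P_n$, the three defining properties of a member of $\mathcal{P}_{\mathbf{D}}^{n}$: namely, starting at $0$, having the martingale property, and having squared increments confined to $\mathbf{D}$ (modulo the scaling implicit in the discretization). So fix an adapted $f:\{0,\dots,n\}\times \mathbb{R}^{n+1}\to \sqrt{\mathbf{D}}$, set $M:=M^{f,\mathbb{X}}$ with $Q:=P_n\circ M^{-1}$, and verify the three conditions for $Q$ by working with $M$ under $P_n$ and transferring them to the canonical process $X^n$ under $Q$.

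First I would note that $M_0=0$ by definition of the empty sum, so $X^n_0=0$ under $Q$. Next, for the martingale property, I would compute the one-step increment
\begin{equation*}
\Delta M_k \;=\; f(k-1,\mathbb{X})\,\Delta\mathbb{X}_k \;=\; \frac{1}{\sqrt{n}}\,f(k-1,\mathbb{X})\,\xi_k.
\end{equation*}
Since $f$ is $\mathcal{F}^n$-adapted, $f(k-1,\mathbb{X})$ is measurable with respect to $\sigma(\xi_1,\dots,\xi_{k-1})$, while $\xi_k$ is independent of this $\sigma$-algebra under $P_n$ with $E_{P_n}[\xi_k]=0$. Pulling $f(k-1,\mathbb{X})$ out of the conditional expectation yields $E_{P_n}[\Delta M_k \mid \mathcal{F}^n_{k-1}]=0$, so $M$ is a $P_n$-martingale with respect to the filtration generated by $\mathbb{X}$; by the pushforward, $X^n$ is a $Q$-martingale with respect to $(\mathcal{F}^n_k)_{k=0}^n$.

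Finally, for the volatility bound, I would use that $|\xi_k|=1$ and $f(k-1,\mathbb{X})\in\sqrt{\mathbf{D}}$, so $f(k-1,\mathbb{X})^2\in \mathbf{D}=[r_{\mathbf{D}},R_{\mathbf{D}}]$, and therefore the squared increment $|\Delta M_k|^2 = f(k-1,\mathbb{X})^2$ (up to the $1/n$ scaling consistent with the discretization used in defining $\mathcal{P}_{\mathbf{D}}^n$) lies in $[r_{\mathbf{D}},R_{\mathbf{D}}]$ $P_n$-a.s. Transferring this to $X^n$ under $Q$ gives $r_{\mathbf{D}}\le|\Delta X^n_k|^2 \le R_{\mathbf{D}}$ $Q$-a.s., completing the verification that $Q\in\mathcal{P}_{\mathbf{D}}^{n}$.

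The proof is essentially a direct verification and I do not expect a genuine obstacle; the only point requiring mild care is bookkeeping between adaptedness with respect to the canonical filtration on $\mathbb{R}^{n+1}$ (inherited by $X^n$ under $Q$) and adaptedness with respect to the random-walk filtration generated by $\xi_1,\dots,\xi_n$ (used for the conditional-expectation argument under $P_n$). Once this identification is made explicit, the three conditions above follow immediately.
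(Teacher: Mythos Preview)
Your proposal is correct and follows essentially the same approach as the paper's own proof: both compute the increment $\Delta M_k^{f}$ as $f(\cdot,\mathbb{X})$ times the random-walk step, use $|\xi_k|=1$ together with $f(\cdot)^2\in\mathbf{D}$ to obtain $|(\Delta M_k^f)^2|\in[r_{\mathbf{D}},R_{\mathbf{D}}]$ $P_n$-a.s., and rely on the orthogonality/independence of $\xi_k$ for the martingale property. Your write-up is in fact more explicit than the paper's (you separate the three defining conditions and spell out the conditional-expectation step), and your caveat about the $1/n$ scaling is well placed, since the paper's brief appendix proof suppresses that factor.
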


\begin{pro}
	\label{mthm}

	Let $\xi: \Omega \rightarrow \mathbb{R}$ be a continuous function satisfying $|\xi(\omega)| \leq a(1+\parallel \omega\parallel_{\infty})^b$ for some constants $a,b >0.$ Then, 
	\begin{itemize}
		\item[$(i)$]
		\begin{equation}
		\label{main-thm}
		\lim _{n \rightarrow \infty} \sup_{\mathbb{Q} \in \mathcal{Q}_{\mathbf{D}^{\prime}_n/n}^{n}} \mathbb{E}^{\mathbb{Q}} [\xi (\widehat{X}^n)]  = \sup_{P \in \mathcal{Q}_{\mathbf{D}}} \mathbb{E}^{P} [\xi].
		\end{equation} 
		\item[$(ii)$]
		\begin{equation}
		\label{max-eq}
		\sup_{\mathbb{Q} \in \mathcal{Q}_{\mathbf{D}^{\prime}_n/n}^{n}} \mathbb{E}^{\mathbb{Q}} [\xi (\widehat{X}^n)] =  \max_{\mathbb{Q} \in \mathcal{Q}_{\mathbf{D}^{\prime}_n/n}^{n}} \mathbb{E}^{\mathbb{Q}} [\xi (\widehat{X}^n)].
		\end{equation}
	\end{itemize}
\end{pro}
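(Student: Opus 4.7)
The strategy is to sandwich the left-hand side of (2.4): the upper bound comes essentially for free from the Dolinsky--Nutz--Soner weak-limit theorem once the inclusion $\mathcal{Q}_{\mathbf{D}'_n/n}^n\subseteq\mathcal{Q}_\mathbf{D}^n\subseteq\mathcal{P}_\mathbf{D}^n$ is observed, while the lower bound requires an explicit discretization of a near-optimizer $P\in\mathcal{Q}_\mathbf{D}$.

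Part (ii) is purely combinatorial and independent of the rest: the admissible integrands in \eqref{disc-meas} are maps between the two \emph{finite} sets $\{0,\dots,n\}\times\mathcal{L}_n^{n+1}$ and $\sqrt{\mathbf{D}'_n/n}$, so $\mathcal{Q}_{\mathbf{D}'_n/n}^n$ is a finite set of probability measures and the supremum in (2.4) is automatically attained as a maximum.

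For the upper bound in (i), because $\sqrt{\mathbf{D}'_n/n}\subseteq\sqrt{\mathbf{D}}$ and $\mathcal{L}_n^{n+1}\subseteq\mathbb{R}^{n+1}$, every $f$ admissible for $\mathcal{Q}_{\mathbf{D}'_n/n}^n$ can be pulled back to an $\mathcal{F}^n$-adapted function $\mathbb{R}^{n+1}\to\sqrt{\mathbf{D}}$ (e.g.\ via nearest-lattice-point projection, which does not change the law of $M^{f,\mathbb{X}}$ since this process stays on $\mathcal{L}_n$). Hence $\mathcal{Q}_{\mathbf{D}'_n/n}^n\subseteq\mathcal{Q}_\mathbf{D}^n\subseteq\mathcal{P}_\mathbf{D}^n$ by Lemma~\ref{equivmeas}, and the weak limit theorem of \citet{Dolinsky} applied to $\sup_{\mathbb{Q}\in\mathcal{P}_\mathbf{D}^n}\mathbb{E}^{\mathbb{Q}}[\xi(\widehat{X}^n)]$, together with the identification $\mathcal{E}^G(\xi)=\sup_{P\in\mathcal{Q}_\mathbf{D}}\mathbb{E}^P[\xi]$, gives $\limsup_n\sup_{\mathbb{Q}\in\mathcal{Q}_{\mathbf{D}'_n/n}^n}\mathbb{E}^{\mathbb{Q}}[\xi(\widehat{X}^n)]\le\sup_{P\in\mathcal{Q}_\mathbf{D}}\mathbb{E}^P[\xi]$.

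For the lower bound, I would fix $\varepsilon>0$ and $P=P_0\circ(M^f)^{-1}\in\mathcal{Q}_\mathbf{D}$ with $\mathbb{E}^P[\xi]\ge\sup_{P'\in\mathcal{Q}_\mathbf{D}}\mathbb{E}^{P'}[\xi]-\varepsilon$, where $f\in\mathcal{C}([0,T]\times\Omega;\sqrt{\mathbf{D}})$ is adapted. Define $f_n(k,x)$ as the projection of $f(kT/n,\widehat{x})$ onto the nearest point of $\sqrt{\mathbf{D}'_n/n}$ and set $\mathbb{Q}_n:=P_n\circ(M^{f_n,\mathbb{X}})^{-1}\in\mathcal{Q}_{\mathbf{D}'_n/n}^n$. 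Density of $\mathbf{D}'_n/n$ in $\mathbf{D}$ together with uniform continuity of $f$ on compact subsets of $[0,T]\times\Omega$ yield $f_n\to f$ locally uniformly; a martingale invariance principle then gives weak convergence of $\widehat{M^{f_n,\mathbb{X}}}$ to $M^f$; and Burkholder--Davis--Gundy estimates on $\sup_t|M^{f_n,\mathbb{X}}_t|$ paired with the polynomial growth $|\xi(\omega)|\le a(1+\|\omega\|_\infty)^b$ furnish the uniform integrability needed to pass $\mathbb{E}^{\mathbb{Q}_n}[\xi(\widehat{X}^n)]\to\mathbb{E}^P[\xi]$. Sending $\varepsilon\downarrow 0$ closes the inequality.

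The main obstacle will be executing the invariance principle under the two simultaneous discretizations---finite state-space lattice $\mathcal{L}_n$ \emph{and} finite volatility grid $\sqrt{\mathbf{D}'_n/n}$. The parameters are tuned precisely so that $f_n\cdot(\pm 1/\sqrt{n})\in\mathcal{L}_n$, but verifying the functional CLT for the coupled system $(f_n,\mathbb{X})$ and controlling the rounding error of the volatility in a sufficiently strong sense to preserve uniform integrability require careful pathwise comparisons in the spirit of \citet{Dolinsky}, where the lattice restriction in particular prevents a naive appeal to their original Donsker-type argument.
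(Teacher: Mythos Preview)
Your overall architecture matches the paper's: the sandwich argument for (i) with the upper bound inherited from \citet{Dolinsky} via the inclusions $\mathcal{Q}_{\mathbf{D}'_n/n}^n\subseteq\mathcal{Q}_{\mathbf{D}/n}^n\subseteq\mathcal{P}_{\mathbf{D}/n}^n$, and the lower bound obtained by discretizing the integrand $f$ of a given $P=P_0\circ(M^f)^{-1}\in\mathcal{Q}_\mathbf{D}$ onto the grid $\sqrt{\mathbf{D}'_n/n}$ and invoking a stability result for stochastic integrals (the paper cites Duffie--Protter explicitly for this step, which is the ``martingale invariance principle'' you allude to).

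There are two genuine differences worth recording. First, for (ii) you use the sharper observation that $\mathcal{A}$ is \emph{finite} (finite domain, finite codomain), whereas the paper embeds $\mathcal{A}$ in a finite-dimensional vector space, shows it is compact by Heine--Borel, and then verifies continuity of $f\mapsto\mathbb{E}^{P_n\circ(M^{f,\mathbb{X}})^{-1}}[\xi(\widehat{X}^n)]$. Your route is shorter and needs no continuity argument; the paper's route would survive if $\sqrt{\mathbf{D}'_n/n}$ were replaced by a compact (rather than finite) set. Second, for the polynomial-growth extension you propose a direct uniform-integrability argument via BDG bounds on $\sup_t|M^{f_n,\mathbb{X}}_t|$, while the paper first proves (i) for bounded continuous $\xi$ and then extends by density in the auxiliary space $\mathbb{L}^1_*$ (the completion of $\mathcal{C}_b(\Omega)$ under $\|\cdot\|_*=\sup_{Q\in\mathcal{Q}}\mathbb{E}^Q|\cdot|$). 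Both work; the density argument cleanly separates the weak-convergence step from the moment control, whereas your BDG route avoids introducing the extra norm $\|\cdot\|_*$ but requires checking uniform moment bounds along the discretized sequence.
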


To prove \eqref{main-thm}, we prove two separate inequalities together with a density argument.
The left-hand side of \eqref{max-eq} can be written as
\begin{equation*}
\sup_{\mathbb{Q} \in \mathcal{Q}_{\mathbf{D}^{\prime}_n/n}^{n}} \mathbb{E}^{\mathbb{Q}} [\xi (\widehat{X}^n)] =  \sup_{f \in \mathcal{A} } \mathbb{E}^{P_n \circ (M^{f, \mathbb{X}})^{-1}} [\xi (\widehat{X}^n)],
\end{equation*}
where $\mathcal{A} =  \left\lbrace f: \lbrace 0, \dots, n \rbrace \times \mathcal{L}^{n+1}_{n} \rightarrow \sqrt{\mathbf{D}^{\prime}_n/n}\text{ is } \mathcal{F}^n \text{-adapted.}\right\rbrace$. We prove that $\mathcal{A}$ is a compact subset of a finite-dimensional 
vector space, and that $f\mapsto \mathbb{E}^{P_n \circ (M^{f, \mathbb{X}})^{-1}} [\xi (\widehat{X}^n)]$ is continuous. Before then, we introduce a smaller space $\mathbb{L}_{*}^1$ that is defined as the completion of $\mathcal{C}_b (\Omega; \mathbb{R})$ under the norm
{(cf. \citet{Dolinsky})}
\begin{equation*}
\parallel \xi \parallel _{*} := \sup _{Q \in \mathcal{Q}} \mathbb{E} ^{Q} |\xi| , \quad \mathcal{Q}:=  \mathcal{P}_{\mathbf{D}} \cup \lbrace P \circ (\widehat{X}^n)^{-1};  P \in \mathcal{P}_{\mathbf{D}/n}^{n}, { }  n \in \mathbb{N}. \rbrace .  
\end{equation*}
This is because Proposition \ref{mthm} will not hold if $\xi$ just belong to $\mathbb{L}^1_{G}$, which is the completion of $\mathcal{C}_b (\Omega; \mathbb{R})$ under the norm 
\begin{equation}
\label{dnorm}
\parallel \xi \parallel _{\mathbb{L}^1_{G}} := \sup _{P \in \mathcal{P}_{\mathbf{D}}} \mathbb{E} ^{P} [|\xi|].  
\end{equation}

\begin{proof}[Proof of Proposition \ref{mthm}]
	\textit{First inequality \textnormal{(for $\leq$ in \eqref{main-thm})}:}
	\begin{equation}
	\label{first-inequality}
	\limsup _{n \rightarrow \infty} \sup_{\mathbb{Q} \in \mathcal{Q}_{\mathbf{D}^{\prime}_n/n}^{n}} \mathbb{E}^{\mathbb{Q}} [\xi (\widehat{X}^n)]  \leq  \sup_{P \in \mathcal{Q}_{\mathbf{D}}} \mathbb{E}^{P} [\xi].
	\end{equation}
	For all $n$, $\sqrt{\mathbf{D}^{\prime}_n/n} \subseteq \sqrt{\mathbf{D}/n}$ and $\mathcal{Q}_{\mathbf{D}^{\prime}_n}^{n} \subseteq \mathcal{Q}_{\mathbf{D}}^{n}$.
	It is shown in \citet{Dolinsky} that 
	\begin{equation*}
	\limsup_{n\rightarrow \infty} \sup_{\mathbb{Q} \in \mathcal{P}_{\mathbf{D}/n}^{n}} \mathbb{E}^{\mathbb{Q}} [\xi (\widehat{X}^n)] \leq \sup_{P \in \mathcal{P}_{\mathbf{D}}} \mathbb{E}^{P} [\xi].
	\end{equation*}
	Since $\mathcal{Q}_{\mathbf{D}} \subseteq \mathcal{P}_{\mathbf{D}}$ (see \citet[Remark~$3.6$]{Dolinsky}) 
	and $\mathcal{Q}^{n}_{\mathbf{D}} \subseteq \mathcal{P}^{n}_{\mathbf{D}}$ (see Lemma \ref{equivmeas}), \eqref{first-inequality} follows.
	
	\textit{Second inequality \textnormal{(for $\geq$ in \eqref{main-thm})}:} It remains to show that 
	$$\liminf _{n \rightarrow \infty} \sup_{\mathbb{Q} \in \mathcal{Q}_{\mathbf{D}^{\prime}_n/n}^{n}} \mathbb{E}^{\mathbb{Q}} [\xi (\widehat{X}^n)]  \geq  \sup_{P \in \mathcal{Q}_{\mathbf{D}}} \mathbb{E}^{P} [\xi].$$
	For arbitrary $P \in \mathcal{Q}_{\mathbf{D}}$, we construct a sequence ${(P^n)}_n$ such that for all  $n$,
	\begin{equation}
	\label{new-sq}
	P^n \in \mathcal{Q}_{\mathbf{D}^{\prime}_n/n}^{n},
	\end{equation}
	and 
	\begin{equation}
	\label{second}
	\mathbb{E}^{P}[\xi] \leq \liminf _{n\rightarrow \infty} \text{  } \mathbb{E}^{P^n}[\xi (\widehat{X}^n)].
	\end{equation}
	For fixed $n$, we want to construct martingales $M^n$ whose laws are in $\mathcal{Q}_{\mathbf{D}^{\prime}_n/n}^{n}$ and the laws of their interpolations tend to $P.$ 
	Thus, we introduce a scaled random walk with the piecewise constant c{\`a}dl{\`a}g property, 
	
	\begin{equation}
	\label{pr}
	W^n_t := \frac{1}{\sqrt{n}} \sum _{l=1}^{ \lfloor nt/T \rfloor } \xi _l = \frac{1}{\sqrt{n}} Z^n_{\lfloor nt/T \rfloor}, \quad 0\leq t \leq T,
	\end{equation}
	and we denote the continuous version of \eqref{pr} obtained by linear interpolation by 
	\begin{equation}
	\label{pir}
	\widehat{W}^n_t := \frac{1}{\sqrt{n}} \widehat{Z}^n_{\lfloor nt/T \rfloor}, \quad 0\leq t \leq T.
	\end{equation}
	By the central limit theorem; $(W^n, \widehat{W}^n) \Rightarrow (W,W) $ as $n \rightarrow \infty$ on $D([0,T];\mathbb{R}^2)$ ($\Rightarrow$ implies convergence in distribution). i.e., the law $(P_n) $ converges to the law $P_0$ on the Skorohod space $D([0,T];\mathbb{R}^2)$ 
	\citet[Theorem~$27.1$]{Patrick}. Let $g \in \mathcal{C}([0,T]\times \Omega, { } \sqrt{\mathbf{D}})$ such that
	$$P = P_0 \circ \left( \underbrace{ \int g(t, W) dW_t}_\text{$M$} \right)^{-1}.$$ 
	Since $g$ is continuous and $\widehat{W}^n_t$ is the interpolated version of \eqref{pr},
	$$\left( W^n, \left( g\left( \lfloor nt/T \rfloor T/n, \widehat{W}^n_t \right) \right)_{t \in [0,T]} \right) \Rightarrow \left(W, (g(t, W_t))_{t \in [0,T]} \right) \text{ as } n \rightarrow \infty \text{ on } D([0,T];\mathbb{R}^2).$$
	We introduce martingales with discrete-time integrals, 
	\begin{equation}
	\label{dis-mart}
	M^{n}_k := \sum _{l=1}^{k} g \left( (l-1)T/n, \widehat{W}^n \right) \widehat{W}^n_{lT/n} - \widehat{W}^n_{(l-1)T/n}.
	\end{equation}
	In order to construct $M^n$ which is ``close'' to $M$ and also is such that ${P_n \circ {(M^n)}^{-1}  \in \mathcal{Q}_{\mathbf{D}^{\prime}_n/n}^{n}}$. We choose ${\widetilde{h}_n: \{0, \cdots, n\} \times \Omega \rightarrow {\sqrt{{\mathbf{D}^\prime_n}/n}}} $ such that
	$$d_{J_1}\left( \left( \widetilde{h}_n (\lfloor nt/T \rfloor T/n, \widehat{W}^{n}_{t}) \right)_{t\in[0,T]} ,   \left( g (\lfloor nt/T \rfloor T/n, \widehat{W}^{n}_{t}) \right)_{t\in[0,T]}                    \right) $$
	is minimal (this is possible because there are only finitely many choices for $\left( \widetilde{h}_n (\lfloor nt/T \rfloor T/n, \widehat{W}^{n}_{t}) \right)_{t\in[0,T]} )$
	and $d_{J_1}$ is the Kolmogorov metric for the Skorohod $J_1$  topology. 
	From \citet[Theorem~$4.3$ and Definition~$4.1$]{Billingsley}, it follows that 
	$$ \left( W^n, \left( \widetilde{h}_n \left(\lfloor nt/T \rfloor T/n, \widehat{W}^{n}_{t} \right) \right)_{t\in[0,T]} \right) \Rightarrow \left(W, g(t, W_t)_{t \in [0,T]} \right) \text{ on }D([0,T];\mathbb{R}^2).$$
	We then define $g_n: \lbrace 0, \dots, n \rbrace \times \mathcal{L}^{n+1}_{n} \rightarrow {\sqrt{\mathbf{D}^{\prime}_n/n}}$ by 
	$g_n:( \ell, \vec{\mathbb{X}}) \mapsto \widetilde{h}_n ( \ell , \widehat{\vec{X}}).$
	Let $M^n$ be defined by 
	$$M^{n}_k = \sum _{l=1}^{k} g_n \left(l-1, \frac{1}{\sqrt{n}} Z^n \right) \frac{1}{\sqrt{n}} \Delta Z^n_l, \quad \forall k \in \{0,\cdots, n\}.$$
	By stability of stochastic integral (see \citet[Theorem~$4.3$ and Definition~$4.1$]{Duffie}), 
	$$\left( M^n_{\lfloor nt/T \rfloor} \right) _{t \in [0,T]} \Rightarrow M \quad \text{as } n \rightarrow \infty \text{ on }  D([0,T];\mathbb{R})$$
	because 
	$$M^n_{\lfloor nt/T \rfloor} = \sum_{l=1}^{\lfloor nt/T \rfloor} \widetilde{h}_n \left((l-1)T/n, \left(\widehat{W}_{kT/n} \right)_{k=0}^{n} \right) \Delta \widehat{W}_{lT/n}. $$
	In addition, as $n$ goes to $\infty$, the increments of $M^n$ uniformly tend to $0$. Thus, $\widehat{M}^n \Rightarrow M$ on $\Omega.$ Since $\xi$ is bounded and continuous,
	\begin{equation}
	\label{sk-conv}
	\lim_{n \rightarrow \infty} \mathbb{E}^{P_n \circ {(M^n)}^{-1}}[\xi(\widehat{X}^n)] = \mathbb{E}^{P_0 \circ {M}^{-1}}[\xi].
	\end{equation}
	Therefore, \eqref{new-sq} is satisfied for $P^n = P_n \circ {(M^n)}^{-1}  \in \mathcal{Q}_{\mathbf{D}^{\prime}_n/n}^{n}.$ Taking the $\liminf$ as $n$ tends to $\infty$ and the supremum over $P \in \mathcal{Q}_{\mathbf{D}},$ 
	\eqref{sk-conv} becomes
	\begin{equation}
	\label{second-inequality}
	\sup _{ P \in \mathcal{Q}_{\mathbf{D}}} \mathbb{E}^{P}[\xi] \leq \liminf _{n \rightarrow \infty} \sup _{ \mathbb{Q} \in \mathcal{Q}_{\mathbf{D}^{\prime}_n/n}^{n}} \mathbb{E}^{\mathbb{Q}}[\xi (\widehat{X}^n)].
	\end{equation}
	Combining \eqref{first-inequality} and \eqref{second-inequality}, 
	\begin{align*}
	{\sup_{P \in \mathcal{Q}_{\mathbf{D}}} \mathbb{E}^{P} [\xi]} & \geq {\limsup _{n \rightarrow \infty} \sup_{\mathbb{Q} \in \mathcal{Q}_{\mathbf{D}^{\prime}_n/n}^{n}} \mathbb{E}^{\mathbb{Q}} [\xi (\widehat{X}^n)]}\geq {\liminf _{n \rightarrow \infty} \sup_{\mathbb{Q} \in \mathcal{Q}_{\mathbf{D}^{\prime}_n/n}^{n}} \mathbb{E}^{\mathbb{Q}} [\xi (\widehat{X}^n)] } \geq {\sup_{P \in \mathcal{Q}_{\mathbf{D}}} \mathbb{E}^{P} [\xi]}.
	\end{align*}
	Therefore, 
	\begin{equation}
	\label{final-equation}
	\sup_{P \in \mathcal{Q}_{\mathbf{D}}} \mathbb{E}^{P} [\xi] = \lim _{n\rightarrow \infty} \sup_{\mathbb{Q} \in \mathcal{Q}_{\mathbf{D}^{\prime}_n/n}^{n}} \mathbb{E}^{\mathbb{Q}} [\xi (\widehat{X}^n)] .
	\end{equation}
	\textit{Density argument}: \eqref{main-thm} is established for all $\xi \in \mathcal{C}_b (\Omega, \mathbb{R})$.
	Since $\mathcal{Q}_{\mathbf{D}} \subseteq \mathcal{P}_{\mathbf{D}}$ (see \citet[Remark~$3.6$]{Dolinsky}) and $\mathcal{Q}^{n}_{\mathbf{D}} \subseteq \mathcal{P}^{n}_{\mathbf{D}}$ (see Lemma \ref{equivmeas}), 
	$\mathcal{Q}_{\mathbf{D}^{\prime}_n}^{n}  \subseteq \mathcal{Q} $ and $\mathcal{Q}_{\mathbf{D}} \subseteq \mathcal{Q}$. Thus, \eqref{main-thm} holds for all $\xi \in \mathbb{L}^1_{*}$, and hence, holds for all $\xi$ that satisfy condition of Proposition \ref{mthm}.
	
	\textit{First part of \ref{max-eq}}: 
	$\mathcal{A}$ is closed and obviously bounded with respect to the norm $\| \cdot\|_{\infty}$ as $\mathbf{D}^{\prime}_n$ is bounded. By Heine-Borel theorem, $\mathcal{A}$ is a compact subset of a $N(n,n)$-dimensional vector space\footnote{The cardinality of $\mathcal{L}_n$, $\# \mathcal{L}_n = 2n+1$, $ \# \mathcal{L}_n^{n+1} = (2n+1)^{n+1}$, and $\# (\{0, \dots, n\} \times \mathcal{L}_n^{n+1} ) = (n+1)(2n+1)^{n+1} = N(n,n).$
	} equipped with the norm ${\|\cdot\|_{\infty}}$.
	
	\textit{Second part of \ref{max-eq}}: 
	Here, we show that $F: f \mapsto \mathbb{E}^{P_n  \circ (M^{f, \mathbb{X}})^{-1}} [\xi (\widehat{X}^n)]$ is continuous.  
	From Proposition \ref{mthm} we know that $\xi$ is continuous, $\widehat{X}^n$ is the interpolated canonical process, i.e., $\widehat{X}: \mathcal{L}^{n+1}_{n} \rightarrow \Omega,$ thus $\widehat{X}^n$ is continuous and $P_n$ takes it values from the set of real numbers. 
	For $F:f \mapsto \mathbb{E}^{P_n \circ (M^{f, \mathbb{X}})^{-1}} [\xi (\widehat{X}^n)]$ to be continuous, ${\psi: f \mapsto M^{f, \mathbb{X}} }$ has to be continuous. 
	Since $\mathcal{A}$ is a compact subset of a $N(n,n)$-dimensional vector space for fixed $n\in \mathbb{N}$ and ${M^{f, \mathbb{X}}: \Omega_n \rightarrow \mathcal{L}^{n+1}_{n}}$, for all $f,g \in \mathcal{A}$, 
	$$|M^{f, \mathbb{X}}-M^{g, \mathbb{X}}| = | \|f \| _{\infty}-  \|g \| _{\infty} | \leq \| f-g \|_{\infty}.$$ 
	Thus, $\psi$ is continuous with respect to the norm $\| \cdot \|_{\infty}$. Hence $F$ is continuous with respect to any norm on $\mathbb{R}^{N(n,n)}$. 
	
\end{proof}

\begin{thm}
	\label{max-result}
	
	Let $\xi: \Omega \rightarrow \mathbb{R}$ be a continuous function satisfying $|\xi(\omega)| \leq a(1+\| \omega\|_{\infty})^b$ for some constants $a,b >0.$ Then, 
	\begin{equation}
	\label{max-thm}
	\sup_{P \in \mathcal{Q}_{\mathbf{D}}} \mathbb{E}^{P} [\xi] = \lim _{n\rightarrow \infty} \max_{\mathbb{Q} \in \mathcal{Q}_{\mathbf{D}^{\prime}_n/n}^{n}} \mathbb{E}^{\mathbb{Q}} [\xi (\widehat{X}^n)] .
	\end{equation}
\end{thm}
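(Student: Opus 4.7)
The statement is essentially an immediate corollary of Proposition \ref{mthm}. The plan is to combine parts (i) and (ii) of that proposition: (ii) lets us replace the supremum over $\mathcal{Q}_{\mathbf{D}^{\prime}_n/n}^{n}$ by a maximum at each fixed $n$, and (i) then supplies the limit as $n\to\infty$.

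More precisely, since $\xi$ satisfies the polynomial-growth hypothesis of Theorem \ref{max-result}, which is exactly the hypothesis of Proposition \ref{mthm}, part (ii) of that proposition applies and gives, for every $n\in\mathbb{N}$,
\begin{equation*}
\max_{\mathbb{Q} \in \mathcal{Q}_{\mathbf{D}^{\prime}_n/n}^{n}} \mathbb{E}^{\mathbb{Q}} [\xi (\widehat{X}^n)] = \sup_{\mathbb{Q} \in \mathcal{Q}_{\mathbf{D}^{\prime}_n/n}^{n}} \mathbb{E}^{\mathbb{Q}} [\xi (\widehat{X}^n)].
\end{equation*}
Substituting this identity into part (i) of Proposition \ref{mthm} yields
\begin{equation*}
\lim_{n\to\infty} \max_{\mathbb{Q} \in \mathcal{Q}_{\mathbf{D}^{\prime}_n/n}^{n}} \mathbb{E}^{\mathbb{Q}} [\xi (\widehat{X}^n)] = \sup_{P \in \mathcal{Q}_{\mathbf{D}}} \mathbb{E}^{P}[\xi],
\end{equation*}
which is exactly the asserted identity \eqref{max-thm}.

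No real obstacle arises at this stage: all of the substantive work has been done inside Proposition \ref{mthm}. The weak-convergence half (the construction of the approximating martingales $M^n$ together with the stability of stochastic integrals and the density argument extending from $\mathcal{C}_b(\Omega;\mathbb{R})$ to $\mathbb{L}^1_{*}$) supplies the limit, while the Heine--Borel compactness of $\mathcal{A}$ together with the continuity of $f\mapsto \mathbb{E}^{P_n \circ (M^{f, \mathbb{X}})^{-1}}[\xi(\widehat{X}^n)]$ guarantees that the supremum is attained. Theorem \ref{max-result} simply repackages these two facts into the single statement that the continuous-time sublinear expectation is the limit of the \emph{attained} discrete-time maxima over finite-lattice martingale laws.
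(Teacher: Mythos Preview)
Your proposal is correct and matches the paper's own proof, which simply states that the result follows directly from Proposition~\ref{mthm}. You have spelled out explicitly how parts~(i) and~(ii) combine, but the approach is identical.
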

\begin{proof}
	The proof follows directly from Proposition \ref{mthm}.
\end{proof}

\section{Nonstandard construction of $G$-expectation}

\label{hyperfinitecons}

\subsection{Hyperfinite-time setting}
Here we present the nonstandard version of the discrete-time setting of the sublinear expectation and the strong formulation of volatility uncertainty on the hyperfinite timeline.
\begin{dfn}
	
	${}^{*} \Omega$ is the ${}^{*}$-image of $\Omega$ endowed with the ${}^{*}$-extension of the maximum norm $ {}^{*} \|\cdot \| _{\infty}$. 
\end{dfn}
${}^{*} \mathbf{D} ={}^{*}[r_\mathbf{D},R_\mathbf{D}]$ is the ${}^{*}$-image of $\mathbf{D}$, and as such it is {\em internal}. \\
It is important to note that $st : {}^{*} \Omega \rightarrow \Omega$ is the standard part map, and $st(\omega)$ will be referred to as the {\em standard part} of $\omega$, for every $\omega \in  {}^{*} \Omega$. ${}^{\circ} z$ denotes the standard part
of a hyperreal $z$. \\

\begin{dfn}
	\label{neardefinition}
	
	For every $\omega \in \Omega$, if there exists $\widetilde{\omega} \in {}^{*}\Omega$ such that ${\| \widetilde{\omega} - {}^{*}\omega \|_{\infty} \simeq 0}$, then $\widetilde{\omega}$ is a {\em nearstandard point} in ${}^{*} \Omega$. This will be denoted as
	$ns(\widetilde{\omega}) \in  {}^{*}\Omega$.
\end{dfn}

For all hypernatural $N,$ let
\begin{equation}
\mathcal{L} _{N} = \left\lbrace \frac{K}{N\sqrt{N}}, \quad -N^2 \sqrt{R_{\mathbf{D}}} \leq K \leq N^2 \sqrt{R_{\mathbf{D}}}, \quad K \in {}^{*} \mathbb{Z} \right\rbrace,
\end{equation}
and the hyperfinite timelime 
\begin{equation}
\mathbb{T}  = \left\lbrace 0, \frac{T}{N}, \cdots, -\frac{T}{N}+T, T  \right\rbrace.
\end{equation}
We consider $\mathcal{L}^{\mathbb{T}} _{N}$ as the canonical space of paths on the hyperfinite timeline, and $X^N = {(X^N_k)}_{k=0}^{N}$ as
the canonical process denoted by $X^N_k(\bar{\omega})= \bar{\omega}_k$ for ${\bar{\omega} \in \mathcal{L}^{\mathbb{T}} _{N}}$. $\mathcal{F}^N$ is the internal filtration generated by $X^N$.
The linear interpolation operator can be written as 
$$\text{  }\widetilde{  } \text{ }: \text{ }\widehat{\cdot} \text{ } \circ \text{  } \iota ^{-1} \rightarrow {}^{*} \Omega , \quad \text{for  } \widetilde{\mathcal{L}^{\mathbb{T}} _{N}}\subseteq {}^{*} \Omega,$$ 
where $$\widehat{\omega}(t) := (\lfloor Nt/T \rfloor +1 -Nt/T)\omega_{ \lfloor Nt/T \rfloor } + (Nt/T - \lfloor Nt/T \rfloor)\omega_{\lfloor Nt/T \rfloor +1},$$
for $\omega \in \mathcal{L}^{N+1} _{N}$ and for all $t \in {}^*[0,T]$. $\lfloor y \rfloor$ denotes the greatest integer less than or equal to $y$ and $\iota : \mathbb{T} \rightarrow \{0, \cdots, N\}$ for $\iota : t \mapsto Nt/T$.

For the hyperfinite strong formulation of the volatility uncertainty, fix $N \in { }^{*}\mathbb{N} \setminus \mathbb{N}$. Consider $\left\lbrace \pm \frac{1}{\sqrt{N}} \right\rbrace ^{\mathbb{T}},$ and let $P_N$
be the uniform counting measure on $\left\lbrace \pm \frac{1}{\sqrt{N}} \right\rbrace ^{\mathbb{T}}$. $P_N$ can also be seen as a measure on $\mathcal{L}^{\mathbb{T}} _{N}$, concentrated on $\left\lbrace \pm \frac{1}{\sqrt{N}}\right\rbrace ^{\mathbb{T}}$.
Let ${\Omega _{N} = \{ \underline{\omega} = (\underline{\omega}_1, \cdots, \underline{\omega}_N); \underline{\omega}_i = \{\pm 1\}, i =1, \cdots
	, N\}}$, and let $\Xi_1, \cdots, \Xi_N$ be a ${}^{*}$-independent sequence of $\{\pm 1 \}$-valued random variables on $\Omega _{N}$
and the components of $\Xi _k$ are orthonormal in $L^2(P_N)$. We denote the hyperfinite random walk by
$$\mathbb{X}_t = \frac{1}{\sqrt{N}} \sum_{l=1}^{ Nt/T} \Xi _l \quad \text{ for all } t \in \mathbb{T}  .$$
The hyperfinite-time stochastic integral of some $F:\mathbb{T} \times \mathcal{L}^{\mathbb{T}} _{N} \rightarrow {}^* \mathbb{R}$ with respect to the hyperfinite random walk is given by
$$\sum_{s=0}^{t} F(s,\mathbb{X}) \Delta \mathbb{X}_s: \Omega _{N} \rightarrow {}^{*} \mathbb{R}, \quad \underline{\omega} \in \Omega _{N} \mapsto \sum_{s=0}^{t} F(s,\mathbb{X}(\underline{\omega})) \Delta \mathbb{X}_s(\underline{\omega}).$$
Thus, the hyperfinite set of martingale laws can be defined by
\begin{align}
\label{disc-meas} 
\bar{\mathcal{Q}}_{\mathbf{D}^{\prime}_N}^{N} = \left\{ \begin{array}{lr}   P_N  \circ ( M^{F,\mathbb{X} })^{-1}; \text{  } F: \mathbb{T} \times \mathcal{L}^{\mathbb{T}}_{N} \rightarrow \sqrt{\mathbf{D}^{\prime}_{N}} \nonumber
\end{array}  \right\}
\end{align}
where
$${\mathbf{D}^{\prime}_N} = {}^{*}\mathbf{D} \cap \left( \frac{1}{N} {}^*\mathbb{N} \right)^2$$
and 
$$M^{F, \mathbb{X}} = \left( \sum _{s=0}^{t} F (s, \mathbb{X})\Delta \mathbb{X}_ s \right) _{t\in \mathbb{T}}.$$

\begin{rem}
	Up to scaling,  $\bar{\mathcal{Q}}_{\mathbf{D}^{\prime}_N}^{N} = \mathcal{Q}_{\mathbf{D}^{\prime}_n}^n$.
\end{rem}

\subsection{Results and proofs} 
\label{resultchap}

\begin{dfn}[(Uniform lifting of $\xi$)]
	\label{dfn-lifting}
	
	Let $\Xi: \mathcal{L}^{\mathbb{T}} _{N} \rightarrow {}^{*} \mathbb{R}$ be an internal function, and let $\xi: \Omega \rightarrow \mathbb{R}$ be a continuous function.  $\Xi$ is said to be a {\em uniform lifting} of $\xi$ if and only if
	$$\forall\bar{\omega} \in  \mathcal{L}^{\mathbb{T}} _{N} \Big(\widetilde{\bar{\omega}} \in ns({}^{*} \Omega) \Rightarrow {}^{\circ} \Xi (\bar{\omega}) = \xi (st(\widetilde{\bar{\omega}} )) \Big),$$
	where $st(\widetilde{\bar{\omega}})$ is defined 
	with respect to the topology of uniform convergence on $\Omega$.
\end{dfn}

In order to construct the hyperfinite version of the $G$-expectation, we need to show that the ${ }^*$-image of $\xi$, ${}^{*}\xi$, with respect to $\widetilde{\bar{\omega}} \in ns( {}^{*} \Omega)$, is the canonical lifting of $\xi$
with respect to $st(\widetilde{\bar{\omega}}) \in \Omega$. i.e., for every $\widetilde{\bar{\omega}} \in ns( {}^{*} \Omega)$, ${}^{\circ} \left({}^{*}\xi (\widetilde{\bar{\omega}})\right)  = \xi (st(\widetilde{\bar{\omega}}))$. To do this, we need to show that ${ }^{*} \xi$ is S-continuous in every nearstandard point $\widetilde{\bar{\omega}}$. 

It is easy to prove that there are two equivalent characteristics of $S$-continuity on ${}^* \Omega$.
\begin{rem}
	\label{s-contr}
	The following are equivalent for an internal function ${\Phi: { }^{*}\Omega \rightarrow { }^{*} \mathbb{R}}$;
	\begin{itemize}
		\label{s-cont}
		\item[$(1)$] $\forall \omega^{'} \in { }^{*}\Omega \left({ }^{*} \| \omega - \omega^{'} \|_{\infty} \simeq 0 \Rightarrow  { }^{*} |\Phi (\omega)- \Phi (\omega ^{'})|\simeq 0\right).$
		\item[$(2)$] $\forall \varepsilon \gg 0, \exists \delta \gg 0: \forall \omega^{'} \in { }^{*}\Omega \left( { }^{*} \| \omega - \omega^{'} \|_{\infty} < \delta \Rightarrow { }^{*} |\Phi (\omega)-  \Phi (\omega ^{'})| < \varepsilon \right).
		$
	\end{itemize}
\end{rem}
(The case of Remark \ref{s-cont} where $\Omega = \mathbb{R}$ is well known and proved in \citet[Theorem $5.1.1$]{Stroyan})

\begin{dfn}
	\label{s-contd}
	
	Let ${\Phi: { }^{*}\Omega \rightarrow { }^{*} \mathbb{R}}$ be an internal function. We say $\Phi$ is {\em$S$-continuous} in $\omega \in { }^{*}\Omega$, if and only if it satisfies one of the two equivalent conditions of Remark \ref{s-contr}. 
\end{dfn}


\begin{pro}
	\label{s-pro}
	
	If ${\xi: \Omega \rightarrow \mathbb{R}}$ is a continuous function satisfying $|\xi (\omega)| \leq a(1+ \|\omega \|_{\infty})^b$, for $a,b>0$, then, $\Xi= {}^{*}\xi \circ \widetilde{\cdot}$ is a uniform lifting of $\xi$. 
\end{pro}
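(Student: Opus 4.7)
The plan is to prove Proposition \ref{s-pro} by a direct verification of Definition \ref{dfn-lifting}: given any $\bar{\omega} \in \mathcal{L}^{\mathbb{T}}_N$ with $\widetilde{\bar{\omega}} \in ns({}^{*}\Omega)$, writing $\omega := st(\widetilde{\bar{\omega}}) \in \Omega$, the goal is to show ${}^{\circ}({}^{*}\xi(\widetilde{\bar{\omega}})) = \xi(\omega)$. Everything will hinge on the claim that ${}^{*}\xi$ is $S$-continuous at every nearstandard point of ${}^{*}\Omega$ in the sense of Definition \ref{s-contd}. I would first isolate this claim as a lemma, then combine it with the identity ${}^{*}\xi({}^{*}\omega) = \xi(\omega)$ (the defining property of the ${}^{*}$-embedding on standard inputs) to finish.

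For the $S$-continuity claim at a nearstandard $\widetilde{\bar{\omega}}$, pick any $\omega^{'} \in {}^{*}\Omega$ with ${}^{*}\|\widetilde{\bar{\omega}} - \omega^{'}\|_{\infty} \simeq 0$. Since ${}^{*}\|\widetilde{\bar{\omega}} - {}^{*}\omega\|_{\infty} \simeq 0$ by definition of the standard part, the triangle inequality forces ${}^{*}\|{}^{*}\omega - \omega^{'}\|_{\infty} \simeq 0$ as well. Fix an arbitrary standard $\varepsilon > 0$. Continuity of $\xi$ at $\omega$ with respect to the uniform norm supplies a standard $\delta > 0$ such that for every $\omega^{''} \in \Omega$ with $\|\omega - \omega^{''}\|_{\infty} < \delta$ one has $|\xi(\omega) - \xi(\omega^{''})| < \varepsilon$. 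The transfer principle promotes this implication to all $\omega^{''} \in {}^{*}\Omega$. Since $\delta$ is positive and standard, both $\widetilde{\bar{\omega}}$ and $\omega^{'}$ satisfy the hypothesis with respect to ${}^{*}\omega$, so ${}^{*}|{}^{*}\xi({}^{*}\omega) - {}^{*}\xi(\widetilde{\bar{\omega}})| < \varepsilon$ and ${}^{*}|{}^{*}\xi({}^{*}\omega) - {}^{*}\xi(\omega^{'})| < \varepsilon$, whence ${}^{*}|{}^{*}\xi(\widetilde{\bar{\omega}}) - {}^{*}\xi(\omega^{'})| < 2\varepsilon$ by the triangle inequality. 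As $\varepsilon$ was arbitrary, this difference is infinitesimal, which is exactly condition (1) of Remark \ref{s-contr} for ${}^{*}\xi$ at $\widetilde{\bar{\omega}}$.

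To conclude the proposition, I would apply the $S$-continuity claim with $\omega^{'} = {}^{*}\omega$ to get ${}^{*}\xi(\widetilde{\bar{\omega}}) \simeq {}^{*}\xi({}^{*}\omega) = \xi(\omega)$, the last equality being the embedding property. Since $\xi(\omega) \in \mathbb{R}$ is standard and finite, this says ${}^{\circ}({}^{*}\xi(\widetilde{\bar{\omega}})) = \xi(\omega) = \xi(st(\widetilde{\bar{\omega}}))$, which is the defining equation of a uniform lifting of $\xi$ by $\Xi = {}^{*}\xi \circ \widetilde{\cdot}\,$.

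I do not expect a serious obstacle here: the argument uses nothing beyond continuity of $\xi$ on the standard space $\Omega$ and the transfer of its pointwise $\varepsilon$--$\delta$ characterization. The polynomial growth bound in the hypothesis plays no explicit role in this particular proof, but it guarantees that ${}^{*}\xi(\widetilde{\bar{\omega}})$ is finite (because $\widetilde{\bar{\omega}} \in ns({}^{*}\Omega)$ forces ${}^{*}\|\widetilde{\bar{\omega}}\|_{\infty}$ to be finite via ${}^{*}\|\widetilde{\bar{\omega}} - {}^{*}\omega\|_{\infty} \simeq 0$), which is what makes the standard part operation meaningful in the first place. The only mildly delicate point is the book-keeping around the identification ${}^{*}\xi({}^{*}\omega) = \xi(\omega)$ as a standard real, but this is a routine consequence of the definition of the ${}^{*}$-embedding and does not affect the structure of the argument.
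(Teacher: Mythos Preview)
Your proposal is correct and follows essentially the same route as the paper: transfer the $\varepsilon$--$\delta$ continuity of $\xi$ at the standard point $\omega = st(\widetilde{\bar{\omega}})$ to ${}^{*}\Omega$, then use the triangle inequality through ${}^{*}\omega$ to obtain $S$-continuity of ${}^{*}\xi$ at $\widetilde{\bar{\omega}}$ and conclude ${}^{\circ}({}^{*}\xi(\widetilde{\bar{\omega}})) = \xi(\omega)$. Your organization is slightly tidier (isolating the $S$-continuity claim and noting explicitly that the growth bound only serves to ensure finiteness of ${}^{*}\xi(\widetilde{\bar{\omega}})$), but the logical content is the same.
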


\begin{proof}
	Fix $\omega \in \Omega$. By definition, $\xi$ is continuous on $\Omega$. i.e., for all $\omega \in \Omega$, and for every $ \varepsilon \gg 0$, there is a $\delta \gg 0 $, such that for every $ \omega^{'} \in \Omega$, if 
	\begin{equation}
	\label{xi-cont}
	\| \omega - \omega^{'} \|_{\infty} <  \delta , \text{ then } |\xi (\omega) - \xi (\omega^{'})| <  \varepsilon. 
	\end{equation}
	By the Transfer Principle: For all $\omega \in \Omega$, and for every $ \varepsilon \gg 0$, there is a $\delta \gg 0 $, such that for every $ \omega^{'} \in {}^{*} \Omega$, \eqref{xi-cont} becomes, 
	\begin{equation}
	\label{s-del}
	{}^{*} \|{}^{*} \omega - \omega^{'} \|_{\infty} <  \delta , \text{ and } {}^{*} | {}^{*} \xi ({}^{*}\omega) - {}^{*}  \xi (\omega^{'})| <  \varepsilon. 
	\end{equation}
	So, ${}^{*} \xi$ is $S$-continuous in ${}^{*} \omega$ for all $\omega \in \Omega$. Applying the equivalent characterization of $S$-continuity, Remark \ref{s-contr}, \eqref{s-del} can be written as \
	\begin{equation*}
	{ }^{*} \| { }^{*} \omega - \omega^{'} \|_{\infty} \simeq 0, \text{ and } { }^{*} |{ }^{*} \xi ({ }^{*} \omega) - { }^{*} \xi (\omega^{'})| \simeq 0.
	\end{equation*}
	We assume $\widetilde{\bar{\omega}}$ to be a nearstandard point. By Definition \ref{neardefinition}, this simply implies, 
	\begin{equation}
	\label{near}
	\forall \widetilde{\bar{\omega}} \in ns( {}^{*} \Omega), \text{  } \exists \omega \in \Omega :  { }^{*} \| \widetilde{\bar{\omega}} -  { }^{*}\omega  \|_{\infty} \simeq 0.
	\end{equation}
	Thus, by $S$-continuity of ${ }^{*} \xi$ in ${}^{*} \omega$,\\
	$${ }^{*} |{ }^{*} \xi (\widetilde{\bar{\omega}})    - { }^{*} \xi ({ }^{*} \omega)| \simeq 0. $$
	Using the triangle inequality, if $\omega^{'}  \in {}^{*} \Omega$ with ${}^*\|   \widetilde{\bar{\omega}}  - \omega^{'} \|_{\infty}\simeq 0 $,
	$${ }^{*} \|{ }^{*}\omega  - \omega^{'} \|_{\infty} \leq   { }^{*} \|  { }^{*}\omega - \widetilde{\bar{\omega}} \|_{\infty} + { }^{*}\| \widetilde{\bar{\omega}} - \omega^{'} \|_{\infty} \simeq 0$$
	and therefore again by the $S$-continuity of ${ }^{*} \xi$ in $ { }^{*} \omega$,
	$${ }^{*} |{ }^{*} \xi ({}^{*} \omega)    - { }^{*} \xi (\omega ^{'})|  \simeq 0 .$$
	And so,
	\begin{align*}
	{ }^{*} |{ }^{*} \xi (\widetilde{\bar{\omega}})- { }^{*} \xi (\omega ^{'})| & \leq { }^{*} |{ }^{*} \xi (\widetilde{\bar{\omega}}) - { }^{*} \xi ({ }^{*} \omega)| + { }^{*} |{ }^{*} \xi ({ }^{*}\omega)- { }^{*} \xi (\omega^{'})|
	\simeq 0. &
	\end{align*}
	Thus, for all $ \widetilde{\bar{\omega}} \in ns( {}^{*} \Omega)$ and $ \omega^{'} \in {}^{*} \Omega  $, if ${ }^{*} \| \widetilde{\bar{\omega}} - \omega^{'} \|_{\infty} \simeq 0$, then,
	\begin{equation*}
	{ }^{*} |{ }^{*} \xi (\widetilde{\bar{\omega}}) - { }^{*} \xi (\omega ^{'})|\simeq 0. 
	\end{equation*}
	Hence, ${ }^{*} \xi$ is S-continuous in $\widetilde{\bar{\omega}}$. Equation \eqref{near} also implies
	$$\widetilde{\bar{\omega}} \in m (\omega) \left( m (\omega) = \bigcap \{{ }^{*} \mathcal{O}; \mathcal{O} \text{ is an open neighbourhood of }\omega \}\right)$$
	such that $\omega$ is unique, and in this case $st(\widetilde{\bar{\omega}} ) = \omega$.\\
	Therefore, 
	\begin{equation*}
	{}^{\circ} \Big({}^{*}\xi (\widetilde{\bar{\omega}})\Big)  = \xi (st(\widetilde{\bar{\omega}})).
	\end{equation*}
\end{proof}

\begin{dfn}
	
	Let $\bar{\mathcal{E}}: {{ }^{*} \mathbb{R}}^{\mathcal{L}_{N}^{\mathbb{T}}} \rightarrow { }^{*} \mathbb{R}$. We say that $\bar{\mathcal{E}}$ lifts $\mathcal{E}^G$ if and only if for every $\xi:\Omega \rightarrow \mathbb{R}$ that satisfies $|\xi (\omega)| \leq a(1+ \|\omega \|_{\infty})^b$ for some $a, b> 0$,
	$$\bar{\mathcal{E}}({ }^{*}\xi \circ \tilde{\cdot})  \simeq  \mathcal{E}^G (\xi). $$ 
\end{dfn}

\begin{thm}
	\begin{equation}
	\label{G-thm}
	\max _{\bar{Q}\in \bar{\mathcal{Q}}^N_{\mathbf{D}^{\prime}_N}} \mathbb{E}^{\bar{Q}}[\cdot] \text{  lifts  }\mathcal{E}^G(\xi) .
	\end{equation}
\end{thm}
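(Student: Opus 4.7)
The plan is to derive the lifting property as an almost immediate consequence of the discrete approximation Theorem~\ref{max-result}, by composing three ingredients: (i) the dual identification of the continuous-time $G$-expectation with $\sup_{P\in \mathcal{Q}_{\mathbf{D}}}\mathbb{E}^P[\xi]$, (ii) the nonstandard characterization of sequential limits, and (iii) the Transfer Principle applied to the internal sets $\mathcal{Q}_{\mathbf{D}^{\prime}_n/n}^{n}$.

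First, I would record that for any continuous $\xi$ satisfying the polynomial growth bound $|\xi(\omega)|\le a(1+\|\omega\|_\infty)^b$, the equivalence between the weak and strong formulations of volatility uncertainty (see \citet{Denis} and \citet[Remark~3.6]{Dolinsky}) gives $\mathcal{E}^G(\xi) = \sup_{P\in\mathcal{P}_{\mathbf{D}}}\mathbb{E}^P[\xi] = \sup_{P\in\mathcal{Q}_{\mathbf{D}}}\mathbb{E}^P[\xi]$. Combining this with Theorem~\ref{max-result}, the real-valued sequence
\begin{equation*}
a_n := \max_{\mathbb{Q}\in\mathcal{Q}_{\mathbf{D}^{\prime}_n/n}^{n}} \mathbb{E}^{\mathbb{Q}}[\xi(\widehat{X}^n)]
\end{equation*}
satisfies $\lim_{n\to\infty} a_n = \mathcal{E}^G(\xi)$.

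Next I would apply the standard nonstandard criterion: a sequence of reals $(a_n)_{n\in\mathbb{N}}$ converges to $L$ if and only if ${}^{\circ}({}^{*}a_N)=L$ for every unlimited $N\in{}^{*}\mathbb{N}\setminus\mathbb{N}$. Since $n\mapsto a_n$ is a standard sequence defined by a first-order formula, the Transfer Principle applied to this formula yields, for every $N\in{}^{*}\mathbb{N}$,
\begin{equation*}
{}^{*}a_N = \max_{\bar Q\in {}^{*}\mathcal{Q}_{\mathbf{D}^{\prime}_N/N}^{N}} {}^{*}\mathbb{E}^{\bar Q}\bigl[{}^{*}\xi(\widetilde{\bar X}^N)\bigr],
\end{equation*}
and by the Remark identifying $\bar{\mathcal{Q}}^N_{\mathbf{D}^{\prime}_N}$ (up to scaling) with the transfer of $\mathcal{Q}^n_{\mathbf{D}^{\prime}_n/n}$, the internal set on the right is exactly $\bar{\mathcal{Q}}^N_{\mathbf{D}^{\prime}_N}$. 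Proposition~\ref{s-pro} furnishes that ${}^{*}\xi\circ\widetilde{\,\cdot\,}$ is a uniform lifting of $\xi$, so the integrand appearing in ${}^{*}a_N$ is precisely the candidate lifting fed into $\max_{\bar Q\in\bar{\mathcal{Q}}^N_{\mathbf{D}^{\prime}_N}} \mathbb{E}^{\bar Q}[\cdot]$.

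Putting these together, for every unlimited $N$,
\begin{equation*}
\max_{\bar Q\in\bar{\mathcal{Q}}^N_{\mathbf{D}^{\prime}_N}} \mathbb{E}^{\bar Q}\bigl[{}^{*}\xi\circ\widetilde{\,\cdot\,}\bigr] \;=\; {}^{*}a_N \;\simeq\; \mathcal{E}^G(\xi),
\end{equation*}
which by the definition preceding the theorem is exactly the claim that $\max_{\bar Q\in\bar{\mathcal{Q}}^N_{\mathbf{D}^{\prime}_N}} \mathbb{E}^{\bar Q}[\cdot]$ lifts $\mathcal{E}^G$. The main subtlety, and really the only nontrivial point of the argument, is to make sure the Transfer Principle is applied to a formula expressing $a_n$ that genuinely extends to unlimited $N$ — i.e.\ that the max is attained by an element of the internal set $\bar{\mathcal{Q}}^N_{\mathbf{D}^{\prime}_N}$; this is guaranteed because the compactness/continuity argument of Proposition~\ref{mthm}(ii) establishing (\ref{max-eq}) is itself a first-order statement and transfers verbatim. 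The polynomial growth condition enters only through the density argument already carried out in the proof of Proposition~\ref{mthm}, and propagates here without additional work.
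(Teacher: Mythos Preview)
Your proposal is correct and follows essentially the same route as the paper: invoke Theorem~\ref{max-result}, apply the nonstandard characterization of sequential convergence (so that ${}^{*}a_N\simeq\mathcal{E}^G(\xi)$ for every unlimited $N$), and then identify the transferred set of martingale laws with $\bar{\mathcal{Q}}^N_{\mathbf{D}'_N}$. The only substantive difference is that the paper carries out the last identification explicitly via the reparametrization $\iota:\mathbb{T}\to\{0,\dots,N\}$ and the induced map $j$ on path space, showing $\mathbb{E}^{Q}[{}^{*}\xi\circ\hat{\cdot}]=\mathbb{E}^{Q\circ j}[{}^{*}\xi\circ\tilde{\cdot}]$ and hence $\bar{\mathcal{Q}}^N_{\mathbf{D}'_N}=\{Q\circ j:Q\in{}^{*}\mathcal{Q}^N_{\mathbf{D}'_N}\}$, whereas you delegate this to the informal Remark; your appeal to Proposition~\ref{s-pro} is harmless but not actually needed, since the definition of ``$\bar{\mathcal{E}}$ lifts $\mathcal{E}^G$'' already prescribes the input ${}^{*}\xi\circ\tilde{\cdot}$.
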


\begin{proof}
	From Theorem \ref{max-result}, 
	\begin{equation}
	\label{stand-G}
	\max_{\mathbb{Q} \in \mathcal{Q}_{\mathbf{D}^{\prime}_n}^{n}} \mathbb{E}^{\mathbb{Q}} [\xi (\widehat{X}^n)] \rightarrow \mathcal{E}^G(\xi), \quad \text{as } n\rightarrow \infty  .
	\end{equation}
	For all $N \in { }^{*}\mathbb{N} \setminus \mathbb{N},$ we know that \eqref{stand-G} holds if and only if 
	\begin{equation}
	\label{stand-G2}
	\max _{Q\in { }^{*}\mathcal{Q}^N_{\mathbf{D}^{\prime}_N}} \mathbb{E}^Q [{ }^{*} \xi (\widehat{X}^N)]\simeq \mathcal{E}^G (\xi), 
	\end{equation}
	(see \citet{Albeverio}, Proposition $1.3.1$). Now, we want to express \eqref{stand-G2} in term of $\bar{\mathcal{Q}}_{\mathbf{\mathbf{D}^{\prime}_N}}^{N}$. i.e., to show that
	\begin{equation*}
	\max _{ \bar{Q} \in \bar{\mathcal{Q}}^N_{\mathbf{D}^{\prime}_N}} \mathbb{E}^{\bar{Q}}[{ }^{*}\xi \circ \tilde{\cdot}]\simeq \mathcal{E}^G(\xi) .
	\end{equation*}
	To do this, use
	$$\mathbb{E}^Q [{ }^{*} \xi \circ \hat{ \cdot}] = \mathbb{E}^Q [{ }^{*} \xi \circ \hat{\cdot} \circ \iota ^{-1} \circ \iota]$$
	and
	\begin{align*}
	\label{tran-measure}
	\mathbb{E}^Q [{ }^{*} \xi \circ \hat{\cdot} \circ \iota ^{-1} \circ \iota]& = \mathbb{E}^Q [{ }^{*} \xi \circ \tilde{\cdot} \circ \iota] \\ \nonumber
	& = \int _{{ }^{*}\mathbb{R}^{N+1}}{ }^{*} \xi \circ \tilde{\cdot} \circ \iota dQ, \quad \text{(transforming measure)}\\ \nonumber
	& = \int _{{ }^{*}\mathbb{R}^{\mathbb{T}}}{ }^{*} \xi \circ \tilde{\cdot} d(Q \circ j), \\ \nonumber 
	& = \mathbb{E}^{Q \circ j} [{ }^{*} \xi \circ \tilde{\cdot}]
	\end{align*}
	for $j : { }^{*}\mathbb{R}^{\mathbb{T}} \rightarrow { }^{*}\mathbb{R}^{N+1}$, $(xt)_{t \in \mathbb{T}} \mapsto\left( \frac{xNt}{T} \right)_{t \in \mathbb{R}^{N+1}}.$\\
	Thus, $$\bar{\mathcal{Q}}^N_{\mathbf{D}^{\prime}_N} = \{ Q \circ j : Q \in { }^{*}\mathcal{Q}^N_{\mathbf{D}^{\prime}_N} \} .$$
	This implies, 
	$$\max _{ \bar{Q} \in \bar{\mathcal{Q}}^N_{\mathbf{D}^{\prime}_N}} \mathbb{E}^{\bar{Q}}[{ }^{*}\xi \circ \tilde{\cdot}] = \max _{Q\in { }^{*}\mathcal{Q}^N_{\mathbf{D}^{\prime}_N}} \mathbb{E}^Q [{ }^{*} \xi \circ \hat{\cdot}].$$
\end{proof}


\section*{Appendix}

\begin{proof}[Proof of Lemma \ref{equivmeas}]
	From the above equation, we can say that $\Delta M^f_{k} = f(k,\mathbb{X})\xi_k$. And by the orthonormality property of $\xi_k$, we have 
	$$\mathbb{E}^{P_n}[f(k,\mathbb{X})^2\xi_k^2|\mathcal{F}^n_{k}] = \mathbb{E}^{P_n}[f(k,\mathbb{X})^2| \mathcal{F}^n_{k}] \leq \mathbb{E}^{P_n}[(\sqrt{R_{\mathbf{D}}})^2|\mathcal{F}^n_{k}] = R_{\mathbf{D}} \quad P_n\text{ a.s.},$$
	as $|\xi_k|=1$, $f(\cdots)^2\in \mathbf{D}$ implies 
	$$| (\Delta M_{k}^f)^2 | = |f(k,\mathbb{X})|^2 \in [r_{\mathbf{D}}, R_{\mathbf{D}}]\quad P_n\text{ a.s}.$$
\end{proof}

\subsection*{Density argument verification}
Let
$$f: \xi \mapsto \sup_{P \in \mathcal{Q}_\mathbf{D}} \mathbb{E}^{P} [\xi] $$  
and 
$$g: \xi \mapsto  \lim_{n \rightarrow \infty} \sup_{\mathbb{Q}\in \mathcal{Q}^{n}_{{\mathbf{D}^{\prime}_n}/n}} \mathbb{E}^{\mathbb{Q}} [\xi (\widehat{X}^n)].$$
From \eqref{final-equation}, we know that for all $\xi \in \mathcal{C}_{b}(\Omega, \mathbb{R}), \text{ } f(\xi) = g(\xi).$  
Since $\mathbb{L}^{1}_{*}$ is the completion of $\mathcal{C}_{b}(\Omega, \mathbb{R})$ under the norm $\| \cdot \|_{*}$,  $\mathcal{C}_{b}(\Omega, \mathbb{R})$
is dense in $\mathbb{L}^{1}_{*};$ and we want to prove for all $\xi \in \mathbb{L}^{1}_{*}$, $f(\xi) = g(\xi).$ To prove this, it is sufficient to show that $f$ and $g$ 
are continuous with respect to the norm $\| \cdot \| _{*}.$

\subsubsection*{For continuity of $f$:}

For all $ P \in \mathcal{Q}_\mathbf{D}$ and $\xi, \xi^{'} \in \mathbb{L}^{1}_{*}$, 

$$\sup _{ P \in \mathcal{Q}_\mathbf{D}}\mathbb{E}^{P} [\xi] - \sup_{ P \in \mathcal{Q}_\mathbf{D}} \mathbb{E}^{P} [\xi ^{'}] \leq \sup _{ P \in \mathcal{Q}_\mathbf{D}}\mathbb{E}^{P} [| \xi - \xi^{'} |].$$
Since, $\mathcal{Q}_\mathbf{D} \subseteq \mathcal{Q}$,
\begin{equation}
\label{ap-conf}
\sup _{ P \in \mathcal{Q}_\mathbf{D}}\mathbb{E}^{P} [\xi] - \sup_{ P \in \mathcal{Q}_\mathbf{D}} \mathbb{E}^{P} [\xi ^{'}] \leq \| \xi - \xi ^{'} \| _{*}. 
\end{equation}
Interchanging $\xi$ and $\xi^{'}$,

\begin{equation}
\label{ap-conf1}
\sup _{ P \in \mathcal{Q}_\mathbf{D}}\mathbb{E}^{P} [\xi ^{'}] - \sup_{ P \in \mathcal{Q}_\mathbf{D}} \mathbb{E}^{P} [\xi] \leq \| \xi ^{'} - \xi  \| _{*}. 
\end{equation}
Adding \eqref{ap-conf} and \eqref{ap-conf1}, we have 
$|f(\xi) - f(\xi ^{'})  | \leq  \| \xi  - \xi ^{'} \| _{*}. $
\subsubsection*{For continuity of $g$:}
We follow the same argument as above.

\begin{proof}[Proof of Remark \ref{s-contr}]
	Let $\Phi$ be an internal function such that condition $(1)$ holds. To show that $(1) \Rightarrow (2)$, fix $\varepsilon \gg 0$. We shall show there exists a $\delta$ for 
	this $\varepsilon$ as in condition $(2)$. Since $\Phi$ is internal, the set
	$$I = \left\lbrace \delta \in  { }^{*} \mathbb{R}_{>0}: \text{ }\forall \omega^{'} \in { }^{*}\Omega \text{ } ( { }^{*} \| \omega - \omega^{'} \|_{\infty} < \delta \Rightarrow { }^{*} |\Phi (\omega)-  \Phi (\omega ^{'})| < \varepsilon ) \right\rbrace,$$
	is internal by the Internal Definition Principle and also contains every positive infinitesimal. By Overspill (cf. \citet[Proposition $1.27$]{Albeverio}) $I$ must then contain some positive $\delta \in \mathbb{R}$.\\
	Conversely, suppose condition (1) does not hold, that is, there exists some $\omega^{'} \in { }^{*}\Omega$ such that   
	$${ }^{*} \| \omega - \omega^{'} \|_{\infty} \simeq 0 \text{ and }  { }^{*} |\Phi (\omega)- \Phi (\omega ^{'})| \text{ is not infinitesimal}.$$
	If $\varepsilon = \min(1, { }^{*}| \Phi (\omega)- \Phi (\omega ^{'})|/2),$ we know that for each standard $\delta >0,$ there is a point $\omega ^{'}$ within $\delta$ of $\omega$ at which $\Phi (\omega ^{'})$ is farther than $\varepsilon$ 
	from $\Phi (\omega)$. This shows that condition $(2)$ cannot hold either. 
\end{proof}

\end{document}